\newtheorem{definition}{Definition}
\newtheorem{theorem}{Theorem}
\newtheorem{corollary}{Corollary}
\newtheorem{lemma}{Lemma}
\def\Mk{\mathcal M}
\title{A hierarchical structure in the motion representation of 2-state number-conserving cellular automata}
\author{
  KONG Gil-Tak\\
  Department of Information Engineering\\
  Hiroshima University\\
  Hiroshima, 739-0046\\
    \texttt{ggt3080@gmail.com} \\
   \And
    IMAI Katsunobu \\
 Department of Information Engineering\\
  Hiroshima University\\
  Hiroshima, 739-0046\\
    \texttt{imai@hiroshima-u.ac.jp} \\
  \AND
  NAKANISHI Toru\\
  Department of Information Engineering\\
  Hiroshima University\\
  Hiroshima, 739-0046\\
    \texttt{t-nakanishi@hiroshima-u.ac.jp} \\
}
\begin{document}
\maketitle

\begin{abstract}
A one-dimensional two-state number-conserving cellular automaton (NCCA) is a cellular automaton whose states are $0$ or $1$ and where cells take states 0 and 1 and updated their states by the rule which keeps overall sum of states constant. It can be regarded as a kind of particle based modeling of physical systems and has another intuitive representation, motion representation, based on the movement of each particle. We introduced a kind of hierarchical interpretation of motion representations to understand the necessary pattern size to each motion. We show any NCCA of its neighborhood size $n$ can be hierarchically represented by NCCAs of their neighborhood size from $n-1$ to $1$. 
\end{abstract}

\keywords{Cellular automata \and number conservation \and motion representation}

\section{Introduction}
Cellular automata(CA), introduced by von Neumann for modelling biological self-reproduction, is a discrete dynamical system which evolves in discrete space and time~\cite{Von}.
Among many kind of CA, a number-conserving CA (NCCA) has a feature that total number of states in a configuration is identical with that in the configuration of the previous time step~\cite{Boccara1998,Hattori}.
By the feature, NCCA can be a model to analyze a physical phenomenon with the property of mass conservation such as a traffic flow~\cite{Nagel}.

An NCCA can also be considered as a system of particle movements, i.e., each number in a cell represents the number of particles in the cell and the particles move to another cell simultaneously and each particle is not divided or disappeared. To describe the motions of particles, Boccara et al. proposed motion representation~\cite{Boccara1998,Boccara2002}. 
Although an NCCA has many variations of motion representations, Moreira et al.~\cite{Moreira04} introduced a canonical motion representation which is uniquely determined for an NCCA. 

An NCCA and a motion representation are inherently different computing models. For an NCCA, its neighborhood size is essential in contrast to a motion representation. Even the case of a two-state simple shift NCCA for a large neighborhood size $n$, the values for its rule table should be assigned for all $01$-patterns of length $n$ to determine the next state. But in the case of motion representation, just an information of a cell of state 1 is enough to identify the value 1 to be moved. Thus the only motion representation is enough to describe the simple shift CA for any neighborhood size.
The simplest car traffic rule 184~\cite{Nagel}, can be regarded as the combination of a basic shift and a motion depending on a size-two pattern, even the evolution can be embedded into an NCCA of any neighborhood size which is larger than three. 

Thus any motion representation of a two-state NCCA seems to be constructed by the set of motions which are ordered by their pattern size.
From this idea, we introduced a kind of hierarchical interpretation to motion representations to understand the necessary pattern size to each motion. 

In this paper, we show any NCCA of its neighborhood size n can be hierarchically represented by NCCAs of their neighborhood size from $n-1$ to $1$.

\section{Number-Conserving Cellular Automata}
A CA is a machine which cell's states are evolved through the interaction of each cells in a fixed area. The fixed area and the interaction are called neighborhood and rule. There are many kinds of CA according to neighborhood size, state of cells, etc. In this paper, we will deal with only 1-dimensional 2-state CA and we simply call it CA. 

\begin{definition}
[1-dimensional 2-state Cellular Automata]
A 1-dimensional 2-state cellular automaton $A$ is defined by $A = (n, f)$, where its neighborhood size $n$ is a non-negative finite integer and $f : \{0,1\}^{n} \rightarrow \{0,1\}$ is a mapping called the local function.
A configuration over $\{0,1\}$ is a mapping $c : \mathbf{Z} \rightarrow \{0,1\}$  where $\mathbf{Z}$ is the set of all integers. Then ${\rm Conf}(\{0,1\}) = \left\{ c | c : \mathbf{Z} \rightarrow \{0,1\}\right\}$ is the set of all configurations over $\{0,1\}$. The global function $F$ of $A$ is defined as $F : {\rm Conf}(\{0,1\}) \rightarrow {\rm Conf}(\{0,1\})$, i.e., 
\begin{equation*}
    \forall c \in {\rm Conf}(\{0,1\}),~\forall i \in \mathbf{Z} : F(c)(i) = f(c(i) \cdots c(i+n-1)).
\end{equation*}
\end{definition}
As above formula, the focus cell of CA is the left most cell in the neighborhood in this paper.
Note that we use the Wolfram coding \cite{Wolfram} $W(f)$ to represent a local function $f$: $W(f)=\sum
f(a_1\cdots a_n) 2^{2^{n-1} a_1 +2^{n-2} a_2+\cdots +2^0
  a_{n} }$ where the sum is applied on $\forall a_i \in 2^n (1\leq i\leq n)$. To represent a CA, we also use a pair of its neighborhood size and its Wolfram number instead of its local function. The local function $f$ is also referred to as the  \textit{rule} of $A$.
Fig.~\ref{fig:rule-ex} shows the rule table of CA (3, 226) as an example. 
\begin{figure}[h]
\begin{center}
\includegraphics[scale=0.4]{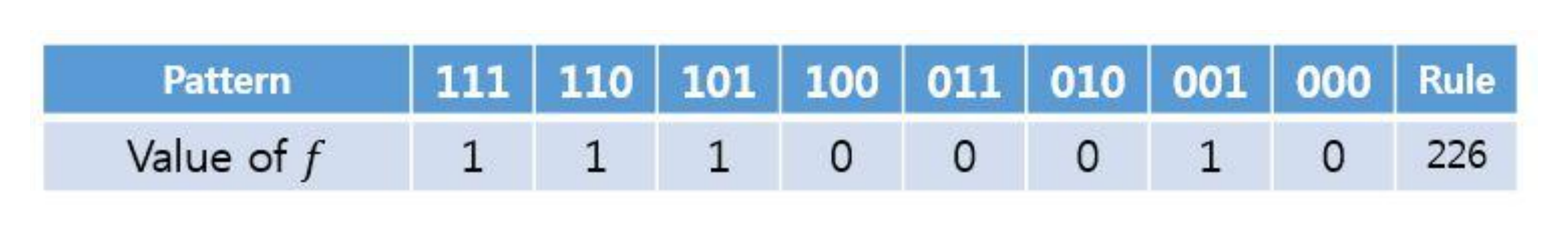}
\caption{The rule table of a 3-cell CA}
\label{fig:rule-ex}
\end{center}
\end{figure}

\begin{definition}
[Number Conserving CA]
A CA $A$ is said to be finite-number-conserving (FNC) iff
\begin{equation*}
    \forall c \in {\rm Conf}(\{0,1\}), \sum_{i\in \mathbf{Z}}\left\{F(c)(i)-c(i)\right\}=0
\end{equation*}
\end{definition}
In this paper, we only think about the case of finite configuration, i.e., the number of nonzero cells are finite. Because Durand et al.~\cite{Durand} showed that FNC is equivalent to the general infinite case, FNC is enough to show the number-conservation of a CA even for the case of infinite number of nonzero cells.

\begin{definition}[Pattern]
A pattern $p=a_1a_2\cdots a_k$ is a sequence of $a_i\in\{0, 1\}$ of a finite length $k$.  
In addition, for a pattern $p$, \ $\_\cdots\_p$ or $p\_\cdots\_$ or $\_\cdots \_p\_ \cdots \_$ is an extended pattern of $p$ where ``$\_\cdots\_$'' represents a finite sequence of the wildcard character ``\_'' which represents both $0$ and $1$.
\end{definition}
For example, $\_p_1p_2$ represents $0p_1p_2$ and  $1p_1p_2$ for any patterns $p_1$ and $p_2$.
Also we use the notation of concatenation of two or more patterns to represent another pattern. For example, if $p=010$, then $0p=0010$ and $p1=0101$.

In the following section, we denote a configuration $c=\cdots,c(i),c(i+1),\cdots\\,c(i+n-1),\cdots$ by $\cdots c_ic_{i+1}\cdots c_{i+n-1}\cdots$ as an abbreviation. 
We regard a sub configuration of finite size as a pattern and use it as the argument list of a local function $f$, i.e., we also denote $f(c(i), \cdots, c(i+n-1))$ by $f(c_i\cdots c_{i+n-1})$. 

We also use the notation, $|~~~|$, to represent the number of elements. For a pattern set $P$, $|P|$ means the number of patterns of $P$.
For a pattern $p$, $|p|$ means 
the length of $p$ as a pattern string.
But it is used in a slightly different way for configurations, i.e., for a configuration $c$, $|c|$ means the number of 1s.

%
\begin{definition}[Bundle] 
For a length $n-1$ pattern $r$, if length $n$ patterns $p$ and $q$ satisfy the condition:
\begin{equation*}
    p=0r,~q=1r \ ({\rm resp.} p=r0,~q=r1)
\end{equation*}
then we call $p(q)$  l-bundle (resp. r-bundle) of $r$. 
When $p,q,r$ satisfy the both cases, we call $r$ the bundle pattern of $p$ and $q$. 
\end{definition}

Next we define motion representation. 
Because an evolution of NCCA can be regarded as the movement of particles, there is another representation of an NCCA rule, motion representation \cite{Boccara1998,Moreira04}.

\begin{definition}
[Motion Representation]
Let $\tilde{p}$ be an extended pattern for invoking a motion $\mu$. A motion $\mu$ is defined as $\mu=(\tilde{p},s,e,v)$ where $s$ and $e$ is a start location and an end location, respectively. $v$ is a finite nonzero integer which represents a moving value from $s$ to $e$.
Let $\Mk$ be a set of motions $\{\mu_1,\dots,\mu_n\}$. For any configuration $c$ of an NCCA $A=(n, f)$ with the global function $F$, for each position in $c$ to which a translated $p_i$ (of $\mu_i$) matches, subtract $v_i$ from the cell $s_i$ and add $v_i$ to the cell $e_i$ simultaneously. If the resulting configuration is equal to $F(c)$ for any $c$, $\Mk$ is a motion representation of $A$.
\end{definition}

We also graphically represent a motion $\mu$ by an arrow over $\tilde{p}$ from $s$ to $e$ whose suffix is $v$. The suffix is omitted when $v$=1. It is shown that the motion representation of any 2-state CA can be composed by motions with $v=1$~\cite{ishizaka}. We briefly show the idea of the proof: suppose there is a motion $\mu_1=(p_1,s_1,e_1,2)$ in a motion representation. The start cell should be the destination of another motion, say $\mu_2=(p_2,s_2,e_2,v_2)$. Then if $v_2=1$ these motions can be replaced by $\mu'_1=(p_1,s_1,e_1,1)$ and $\mu_3=(p_3,s_3,e_3,1)$ where $p_3$ is a union of $\mu_1$ and $\mu_2$ and $s_3$ ($e_3$) is the related position to $s_2$ ($e_1$), respectively. If $v_2=2$ then $\mu'_2=(p_2,s_2,e_2,1)$ is also remained. In turn it is possible to replace all motions of $2$ (or more particles). Fig.~\ref{fig:motion} is an example of 3-cell motion representations.

\begin{figure}[h]
\begin{center}
\includegraphics[scale=0.24]{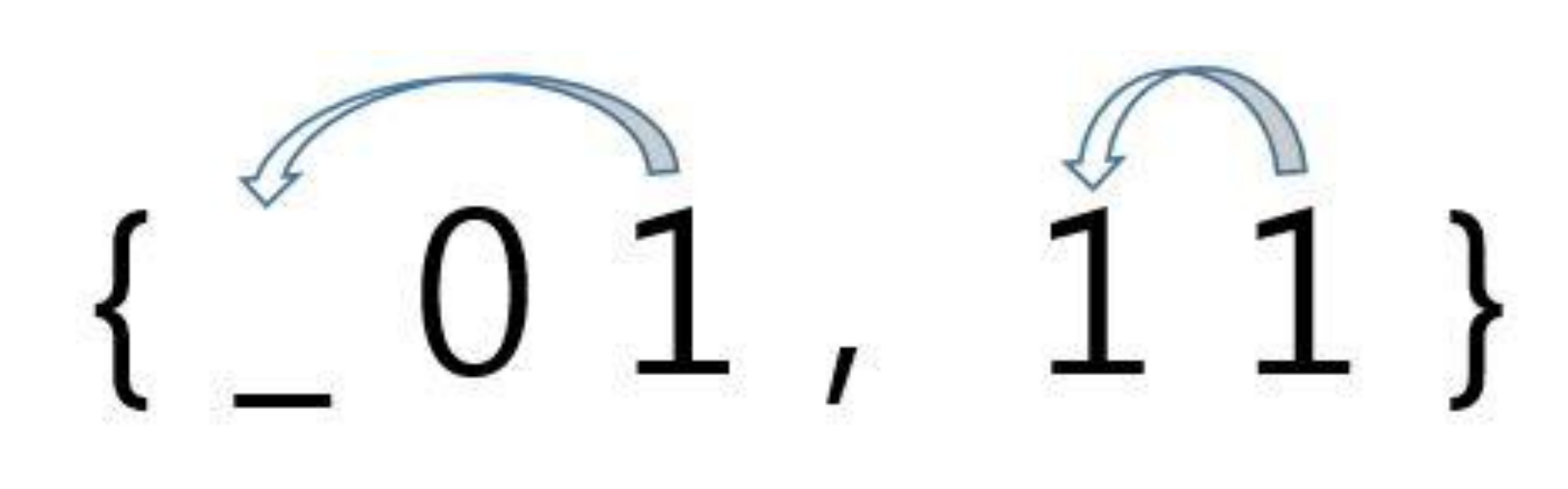}
\caption{Graphical representation of motion representation}
\label{fig:motion}
\end{center}
\end{figure}
\begin{figure}[h]
\begin{center}
\includegraphics[scale=0.6]{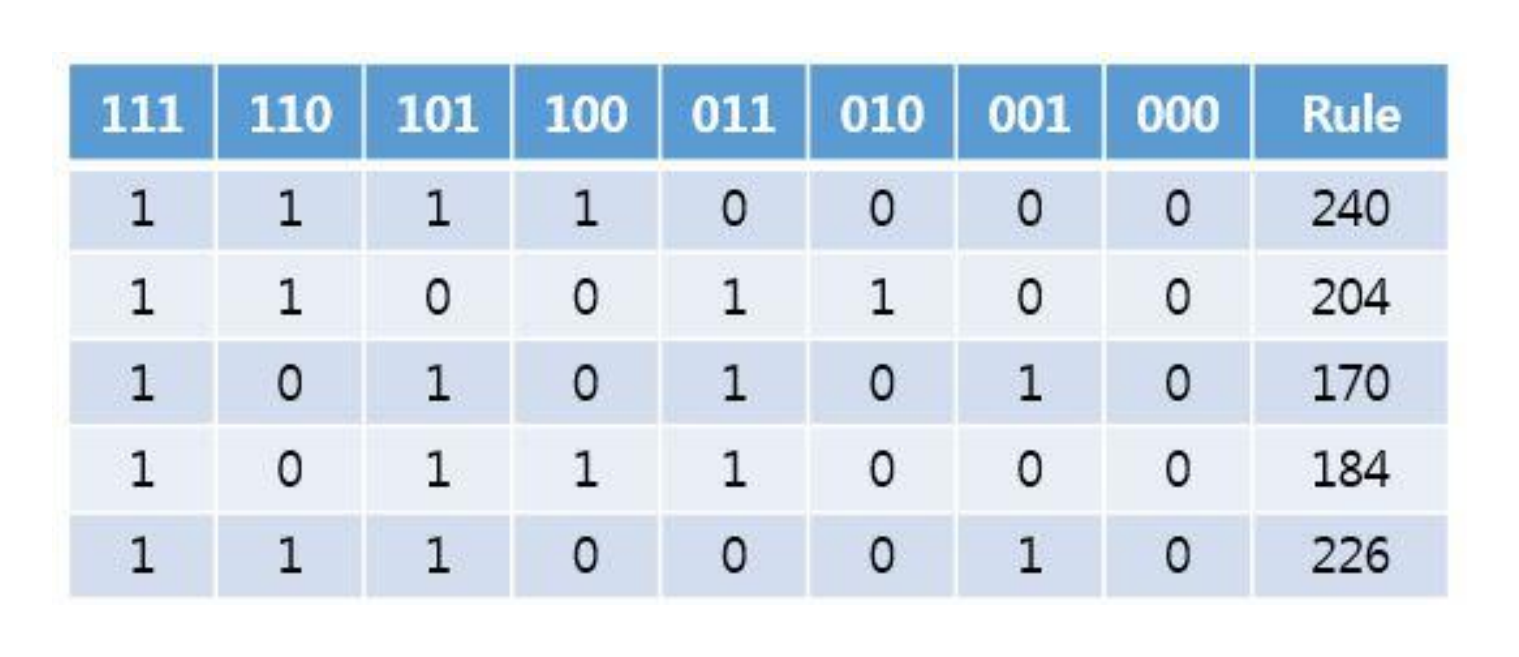}
\caption{3-cell NCCA rules}
\label{fig:3-cell-rules}
\end{center}
\end{figure}
\section{The Bundle tree}
In this section, we show a property of NCCA and the main principle of bundle tree for any NCCA.
\begin{definition}
[Value-1 pattern set]
For a CA $A=(n,f)$, 
we call the pattern set $P_A = \left\{p|f(p)=1\right\}$ the value-1 pattern set of $A$.
\end{definition}
\begin{definition}
[Bundle pattern set]
Let $P$ is a pattern set of length $n$ patterns.
For any $p=a_1\cdots a_n\in P$, if $q=\bar{a_1}a_2\cdots a_n$ (or $a_1\cdots a_{n-1}\bar{a_n}$) is in $P$ then
\begin{equation*}
        \hat{P} = \left\{ r | r=a_2\cdots a_n (~or~ a_1\cdots a_{n-1})\right\}
\end{equation*}
is a bundle pattern set of $P$ and $r$ is a bundle pattern of $p$ and $q$. If there is not such a pattern $q$ for a pattern in $P$ then there is no bundle pattern set of $P$. Also if $r(\in \hat{P})$ is a bundle pattern of $p$ and $q$ then $p$ and $q$ can not be l- or r-bundle of another pattern in $\hat{P}$.
\end{definition}
For example, $\left\{01,11\right\}$ can be a bundle pattern set of $\left\{001, 101, 110, 111\right\}$. Because $01$ can be a bundle pattern of $001, 101$ and $11$ can be a bundle pattern of $110, 111$. Moreover the following sets have no bundle pattern sets.\\
\begin{equation*}
    \left\{001, 010, 110, 111\right\}, \left\{011, 110, 111\right\}, \left\{01, 11, 110, 111\right\}
\end{equation*}

\begin{lemma}\label{lemma:bundle}
For a pattern $p$ in a value-1 pattern set of an NCCA, there exists a pattern $q$ in the set and a pattern $r$ s.t. $p$ and $q$ are either l-bundle or r-bundle of $r$.
\end{lemma}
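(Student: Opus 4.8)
The plan is to reduce the statement to a short finite check by invoking the current (flow) characterization of number-conserving CA. Concretely, I would first record that, by the standard characterization of NCCA in divergence form~\cite{Hattori,Boccara1998}, there is an integer-valued flow function $\Phi$ defined on patterns of length $n-1$ such that $f(y_1\cdots y_n)=y_1+\Phi(y_2\cdots y_n)-\Phi(y_1\cdots y_{n-1})$ for all $y_1,\dots,y_n\in\{0,1\}$. That any such $f$ is number-conserving is immediate, since summing over a finite configuration telescopes; the converse is the content of the characterization. If one prefers a self-contained route, $\Phi$ can be produced explicitly as $\Phi(w_1\cdots w_{n-1})=\sum_{k=1}^{n-1}f(0^{k} w_1\cdots w_{n-k})$ and the identity checked directly from number conservation. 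I will treat $n\ge 2$, the case $n=1$ forcing $f$ to be the identity and being degenerate for the bundle relation.

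Next I would fix the ``middle'' of the given pattern. Write $p=a_1\cdots a_n$ with $f(p)=1$ and set $m=a_2\cdots a_{n-1}$ (empty when $n=2$). The l-sibling $\bar{a_1} a_2\cdots a_n$ and the r-sibling $a_1\cdots a_{n-1}\bar{a_n}$ are obtained from $p$ by flipping the first, resp.\ the last, bit while keeping $m$ fixed, so it suffices to examine the four patterns $xmy$ with $x,y\in\{0,1\}$. Abbreviating $A=\Phi(m0)$, $B=\Phi(m1)$, $C=\Phi(0m)$, $D=\Phi(1m)$, the flow formula gives $f(0m0)=A-C$, $f(0m1)=B-C$, $f(1m0)=1+A-D$, and $f(1m1)=1+B-D$. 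Since $f$ takes values in $\{0,1\}$, each of these four quantities is non-negative; this non-negativity is the only inequality I will use.

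The crux is then a two-by-two case check. I would argue by contradiction: suppose some value-$1$ pattern among the four corners is \emph{isolated}, i.e.\ both the corner reached by flipping its first bit and the corner reached by flipping its last bit have $f=0$. For instance, if $f(0m0)=1$ while $f(1m0)=0$ and $f(0m1)=0$, then $A-C=1$, $D=A+1$, and $B=C$, whence $f(1m1)=1+B-D=C-A=-1$, contradicting non-negativity. The remaining three choices of the isolated corner are symmetric and each likewise forces the diagonally opposite corner to equal $-1$. Hence no value-$1$ pattern is isolated: flipping its first bit or flipping its last bit again yields a pattern in $P_A$. That partner is the desired $q$, and the common block it shares with $p$ (the suffix $a_2\cdots a_n$ in the first-bit case, the prefix $a_1\cdots a_{n-1}$ in the last-bit case) is the bundle pattern $r$, so $p$ and $q$ are an l-bundle, resp.\ an r-bundle, of $r$.

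The step I expect to be the real obstacle is the first one: pinning down the flow representation in the paper's convention (focus cell leftmost) and justifying that the integer flow $\Phi$ exists for \emph{every} NCCA, rather than merely that the divergence form is sufficient for number conservation. Once $\Phi$ is in hand the argument is purely finite and the sign constraint $f\ge 0$ does all the work; the only mild care needed elsewhere is the degenerate identification $A=C$, $B=D$ when $n=2$, which does not affect the computation.
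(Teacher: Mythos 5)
Your proof is correct, but it reaches the key identity by a different route than the paper. Both arguments ultimately rest on the same ``parallelogram'' identity among the four corners $xmy$ with $m=a_2\cdots a_{n-1}$, namely $f(0m0)+f(1m1)=f(0m1)+f(1m0)$, after which non-negativity of $f$ forces every value-$1$ corner to have a partner obtained by flipping its first or last bit. The difference is how that identity is obtained. You import it from the current/flow characterization of number-conserving rules (Hattori--Takesue, Boccara--Fuk\'s~\cite{Hattori,Boccara1998,Boccara2002}), writing $f(y_1\cdots y_n)=y_1+\Phi(y_2\cdots y_n)-\Phi(y_1\cdots y_{n-1})$, from which the identity is an immediate cancellation; your explicit candidate $\Phi(w_1\cdots w_{n-1})=\sum_{k=1}^{n-1}f(0^k w_1\cdots w_{n-k})$ is exactly the Boccara--Fuk\'s expression, and it can indeed be verified by applying finite number conservation to the two single-block configurations $0^\infty y_1\cdots y_n 0^\infty$ and $0^\infty y_2\cdots y_n 0^\infty$ and subtracting, so the step you flag as the ``real obstacle'' is genuine but standard. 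The paper instead stays self-contained: it applies number conservation once to a single configuration containing three widely separated blocks $c_1\cdots c_n$, $\bar{c_1}c_2\cdots c_n$, and $c_1\cdots c_{n-1}\bar{c_n}$, and comparison with the single-block counts yields the identity $f(\bar{c_1}c_2\cdots c_{n-1}\bar{c_n})+f(c_1\cdots c_n)=f(\bar{c_1}c_2\cdots c_n)+f(c_1\cdots c_{n-1}\bar{c_n})$ directly, from which the lemma follows since $f(p)=1$ forces a term on the right to equal $1$. What your version buys is modularity and generality (the divergence form exists for any number of states, and once it is granted the rest is a finite check); what the paper's version buys is independence from the characterization theorem, whose nontrivial direction (every NCCA admits a flow) is precisely what you would otherwise have to prove or cite. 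Your explicit treatment of the degenerate cases $n=1$ and $n=2$ is a point of care the paper glosses over.
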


\begin{proof}
Let $F$ be the global function of an NCCA $A=(n,f)$. 
Then $|F(\cdots0 c_1c_2 \cdots c_n 0\cdots)|=f(0\cdots 0 c_1)+f(0\cdots 0 c_1c_2)+\cdots +f(c_1c_2\cdots c_n)+\cdots +f(c_{n-1} c_n0\cdots 0) +f(c_n 0 \cdots 0)$.\\
Considering the configuration $c= \cdots 0 c_1 c_2 \cdots c_n \underbrace{0\cdots 0}_{k} \bar{c_1} c_2 c_3 \cdots c_n \underbrace{0\cdots 0}_{k} c_1 c_2 \cdots c_{n-1} \bar{c_n} 0 \cdots$($k>n$),  the following equation holds:
\begin{equation*}
   |c| = 2 \sum_{k=1}^n{c_k} + (\bar{c_1}+c_2+\cdots+c_{n-1}+\bar{c_n}),
\end{equation*}\\
where $\bar{c}$ is the negation of $c$ i.e., $c+\bar{c}=1$.
Since $f(0\cdots 0)=0$ then $|F(c)| = $
$|F(\cdots 0c_1\cdots c_n0\cdots)|+|F(\cdots 0\bar{c_1}c_2\cdots c_n0\cdots)|+|F(\cdots 0 c_1\cdots c_{n-1}\bar{c_n}0\cdots)|$

$=2|F(0\cdots 0c_1\cdots c_n0\cdots 0)|-f(c_1c_2\cdots c_n)+f(0\cdots c_1)+\cdots+f(\bar{c_1}c_2\cdots c_n)+f(c_1\cdots c_{n-1}\bar{c_n})+f(c_2\cdots c_{n-1}\bar{c_n}0)+\cdots+f(\bar{c_n}0\cdots 0)$.\\
Moreover the next formula holds because $F$ is the global function of an NCCA. 
\begin{equation}
    2 \sum_{k=1}^n{c_k} + (\bar{c_1}+c_2+\cdots+c_{n-1}+\bar{c_n}) = |F(c)|
\end{equation}
Because $2 \sum_{k=1}^n{c_k} =|F(0\cdots 0c_1\cdots c_n0\cdots 0)|$ and $\bar{c_1}+c_2+\cdots+c_{n-1}+\bar{c_n}=|F(0\cdots 0\bar{c_1}c_2\cdots c_{n-1}\bar{c_n}0\cdots 0)|$, we can get the following formula from (1):
\begin{equation}
    f(\bar{c_1}c_2\cdots c_{n-1}\bar{c_n})+f(c_1c_2\cdots c_n)
    = f(\bar{c_1}c_2\cdots c_n)+f(c_1\cdots c_{n-1}\bar{c_n})
\end{equation}
From (2), we can get the following result.\\
If $f(c_1c_2\cdots c_n)=1$https://www.overleaf.com/project/5da71725c5ddb5000130c4d7
then $f(\bar{c_1}c_2\cdots c_n)=1$ or $f(c_1c_2\cdots c_{n-1}\bar{c_n})=1.$
\end{proof}

By Lemma~\ref{lemma:bundle}, we can know that there are always bundle pattern for all patterns in value-1 pattern set of an NCCA.

When a pattern can be paired with two different elements in the value-1 pattern set $P$ of an NCCA, for example, $\{101, 100, 001\}\subset P$, we show that $000$ is also in $P$ by the next Lemma~\ref{half_lemma2}.
%
\begin{lemma} \label{half_lemma2}
Let $P_A$ is a value-1 pattern set of an NCCA $A(n,f)$. If three patterns $a_1\cdots a_n, a_1\cdots a_{n-1}\bar{a}_n, \bar{a}_1a_2\cdots a_n$ are in $P_A$ then pattern $\bar{a}_1a_2\cdots a_{n-1}\bar{a}_n$ is also in $P_A$.
\end{lemma}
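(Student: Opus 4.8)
The plan is to reduce the lemma to the algebraic identity labeled~(2) in the proof of Lemma~\ref{lemma:bundle}. First I would emphasize that equation~(2),
\begin{equation*}
    f(\bar{c_1}c_2\cdots c_{n-1}\bar{c_n})+f(c_1c_2\cdots c_n) = f(\bar{c_1}c_2\cdots c_n)+f(c_1\cdots c_{n-1}\bar{c_n}),
\end{equation*}
was derived using only the number-conservation of $A$ and therefore holds for \emph{every} bit string $c_1\cdots c_n\in\{0,1\}^n$; it is an unconditional identity on the rule table of an NCCA, not a statement about one particular pattern. The spacing in the witness configuration used to obtain it merely decouples the three local windows, so the identity survives for arbitrary choices of the bits.

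Next I would instantiate it at $c_i=a_i$ for all $i$. The four arguments appearing in~(2) are then exactly the four patterns obtained by independently flipping the first and the last symbol of $a_1\cdots a_n$: the left-hand side contains $f(\bar{a}_1a_2\cdots a_{n-1}\bar{a}_n)$ (the target pattern) together with $f(a_1\cdots a_n)$, while the right-hand side contains $f(\bar{a}_1a_2\cdots a_n)$ and $f(a_1\cdots a_{n-1}\bar{a}_n)$. This is where the three hypotheses enter: by assumption $a_1\cdots a_n$, $a_1\cdots a_{n-1}\bar{a}_n$, and $\bar{a}_1a_2\cdots a_n$ all lie in $P_A$, so each of these three $f$-values equals $1$.

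Finally I would solve for the remaining term. Substituting the three known values turns~(2) into $f(\bar{a}_1a_2\cdots a_{n-1}\bar{a}_n)+1=1+1$, hence $f(\bar{a}_1a_2\cdots a_{n-1}\bar{a}_n)=1$, i.e.\ $\bar{a}_1a_2\cdots a_{n-1}\bar{a}_n\in P_A$, as desired. There is essentially no hard step here: the whole argument is a one-line consequence of~(2), and recognizing that this lemma is precisely the ``both flips'' reading of that identity is the only insight required. The one point that needs a word of care is that $f$ takes values in $\{0,1\}$, so the equation $f(\bar{a}_1a_2\cdots a_{n-1}\bar{a}_n)+1=2$ forces the value to be exactly $1$ and leaves no other possibility; this is what yields membership in $P_A$ rather than merely constraining a sum.
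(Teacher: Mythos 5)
Your proof is correct, but it follows a genuinely different route from the paper's. You observe that identity (2) in the proof of Lemma~\ref{lemma:bundle} is unconditional — its derivation uses only number conservation and the decoupling of the three widely separated blocks, never the hypothesis $f(c_1\cdots c_n)=1$ — and then you simply instantiate it at $c_i=a_i$ and solve for the one unknown term, getting $f(\bar{a}_1a_2\cdots a_{n-1}\bar{a}_n)+1=1+1$. That reading is right, and it makes the lemma a one-line corollary of work already done. The paper instead gives a proof by contradiction that does not invoke (2) at all: it assumes $f(\bar{a}_1a_2\cdots a_{n-1}\bar{a}_n)=0$, writes down the cell-value characterization of number conservation (the telescoping formulas expressing $a_1$ and $\bar{a}_1$ as sums and differences of $f$-values over windows, asserted as ``NCCA principles''), and derives the two incompatible equations (3) and (4), namely $f(0a_2\cdots a_n)=f(0a_2\cdots a_{n-1}\bar{a}_n)$ and $f(0a_2\cdots a_n)=1+f(0a_2\cdots a_{n-1}\bar{a}_n)$. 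The underlying mathematics is the same conservation constraint in both cases, but your version is more economical: it needs no contradiction, no appeal to the unproven characterization formulas, and it exposes the fact that Lemma~\ref{half_lemma2} is literally the ``both symbols flipped'' reading of identity (2). What the paper's route buys in exchange is independence from the particular witness-configuration argument of Lemma~\ref{lemma:bundle}: it would survive even if (2) had only been stated under the hypothesis $f(c_1\cdots c_n)=1$, whereas your argument requires the reader to verify (as you do) that (2) holds for every bit string.
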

\begin{proof}
Suppose $\bar{a}_1a_2\cdots a_{n-1}\bar{a}_n$ is not in $P_A$ with $a_1\cdots a_n, a_1\cdots a_{n-1}\bar{a}_n, \bar{a}_1a_2\cdots a_n \in P_A$. In other words, $f(a_1\cdots a_n)= f(a_1\cdots a_{n-1} \bar{a}_n)=f(\bar{a}_1a_2\cdots a_n)=1$ and $f(\bar{a}_1a_2\cdots a_{n-1}\bar{a}_n)=0$.\\
According to NCCA principles, we can get following formulas;\\
$a_1=f(a_1\cdots a_n)+f(0a_1\cdots a_{n-1})+\cdots +f(0\cdots 0a_1)
-\left\{f(0a_2\cdots a_n)+\cdots +f(0\cdots 0 a_2)\right\}\\
~~~~=f(a_1\cdots a_{n-1}\bar{a}_n)+f(0a_1\cdots a_{n-1})+\cdots +f(0\cdots 0a_1)
-\left\{f(0a_2\cdots a_{n-1}\bar{a}_n)+\cdots +f(0\cdots 0a_2)\right\}$\\
Then $0=f(a_1\cdots a_n)-f(0a_2\cdots a_n)-\{f(a_1\cdots a_{n-1}\bar{a}_n)-f(0a_2\cdots a_{n-1}\bar{a}_n\}$.\\
Since $0=f(a_1\cdots a_n)=f(a_1\cdots a_{n-1}\bar{a}_n)=1$,
\begin{equation}
    f(0a_2\cdots a_n)=f(0a_2\cdots a_{n-1}\bar{a}_n)
\end{equation}
On the same way,\\
$\bar{a}_1=
f(\bar{a}_1a_2\cdots a_n)+f(0\bar{a}_1a_2\cdots a_{n-1})+\cdots +f(0\cdots 0\bar{a}_1)-\{f(0a_2\cdots a_n)+\cdots +f(0\cdots 0a_2)\}$\\
$~~~~~=f(\bar{a}_1a_2\cdots a_{n-1}\bar{a}_n)+f(0\bar{a}_1a_2\cdots a_{n-1})+\cdots +f(0\cdots 0\bar{a}_1)-\{f(0a_2\cdots a_{n-1}\bar{a}_n)+\cdots +f(0\cdots 0a_2)\}$.\\

Then $0=f(\bar{a}_1a_2\cdots a_n)-f(0a_2\cdots a_n)-\{f(\bar{a}_1a_2\cdots a_{n-1}\bar{a}_n)-f(0a_2\cdots a_{n-1}\bar{a}_n\}$.\\
Since $f(\bar{a}_1a_2\cdots a_n)=1,~ f(\bar{a}_1a_2\cdots a_{n-1}\bar{a}_n)=0$,
\begin{equation}
   f(0a_2\cdots a_n)=1+f(0a_2\cdots a_{n-1}\bar{a}_n)
\end{equation}
Formula (3) and (4) be a contradiction.
\end{proof}

In the process of making a bundle pattern set of the value-1 pattern set of an NCCA, a pattern might be paired with two different patterns. For example, the value-1 pattern set of an NCCA rule 204 is $\{010, 011, 110, 111\}$. In this case, $010$ can not only make a pair with $011$ to $01$ but also $110$ to $10$. But by Lemma~\ref{half_lemma2}, there exist another pattern $111$ which can be paired with $011$ and $110$. Therefore we can make two distinct pairs like $\{010, 011\}, \{110, 111\}$ or $\{010, 110\}, \{011, 111\}$.
Thus, every patterns in value-1 pattern set can be paired without any overlapping patterns by Lemma~\ref{half_lemma2}.

For the value-1 pattern set of an NCCA $P$ and for a pattern $r$, suppose that $\{0r,1r,r0,r1\}\subset P$. Then the bundle pattern of $0r,1r$ and $r0,r1$ is both $r$. But in such case, there exists another element which can be paired with one of $0r,1r,r0,r1$ by the next Lemma. Thus all bundle patterns from a value-1 pattern set can be different.

\begin{lemma}
Let $P_A$ be a value-1 pattern set of an NCCA $A=(n,f)$. If
$0a_1\cdots a_{n-1}$, $1a_1\cdots a_{n-1}$, $a_1\cdots a_{n-1}0$, $a_1\cdots a_{n-1}1$ $\in P_A$ then at least one pattern among $0a_1\cdots a_{n-2}\bar{a}_{n-1}$,  $1a_1\cdots a_{n-2}\bar{a}_{n-1}$, $\bar{a}_1a_2\cdots a_{n-1}0$, $\bar{a}_1a_2\cdots a_{n-1}1$ be an element of $P_A$ where $a_i \in \{0,1\}$.
\end{lemma}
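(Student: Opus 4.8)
The plan is to first collapse the four candidate patterns into two and then attack the statement with the flow (number‑conservation) identity already used in the proof of Lemma~\ref{half_lemma2}. Write $r=a_1\cdots a_{n-1}$. Applying the relation~(2) from the proof of Lemma~\ref{lemma:bundle} to the length‑$n$ pattern $0r$ (flip its first and last cells) gives $f(0a_1\cdots a_{n-2}\bar a_{n-1})=f(1a_1\cdots a_{n-2}\bar a_{n-1})=:\alpha$, and applying it to $r0$ gives $f(\bar a_1a_2\cdots a_{n-1}0)=f(\bar a_1a_2\cdots a_{n-1}1)=:\beta$. So the four candidates reduce to two values $\alpha,\beta\in\{0,1\}$, and the claim is exactly that $\alpha+\beta\ge 1$, i.e.\ that $\alpha=\beta=0$ is impossible.

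Next I would introduce the flow potential underlying number conservation: every such NCCA can be written as
\begin{equation*}
 f(x_1\cdots x_n)=x_1+\phi(x_1\cdots x_{n-1})-\phi(x_2\cdots x_n),\qquad \phi(0\cdots0)=0,
\end{equation*}
where $\phi$ is the partial‑flow quantity implicit in the telescoping cell‑value sum of Lemma~\ref{half_lemma2} (conservation is precisely the statement that the sum telescopes to $0$). The four hypotheses become $\phi(0a_1\cdots a_{n-2})=\phi(r)+1$, $\phi(1a_1\cdots a_{n-2})=\phi(r)$ and $\phi(a_2\cdots a_{n-1}0)=\phi(a_2\cdots a_{n-1}1)=a_1+\phi(r)-1$, while a short computation gives $\alpha=1+\phi(r)-\phi(a_1\cdots a_{n-2}\bar a_{n-1})$ and $\beta=2\bar a_1+\phi(\bar a_1a_2\cdots a_{n-1})-\phi(r)$. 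Thus ruling out $\alpha=\beta=0$ amounts to proving that the two ``end‑flipped'' potentials $\phi(a_1\cdots a_{n-2}\bar a_{n-1})$ and $\phi(\bar a_1a_2\cdots a_{n-1})$ cannot simultaneously be pushed to the opposite extremes that force both candidates to vanish.

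The hard part is exactly this last step, and I expect it to be a genuine obstruction rather than a routine computation. The only constraints left on $\phi$ are that $f\in\{0,1\}$ everywhere, i.e.\ $\phi(u)-\phi(v)\in\{0,1\}$ along every de~Bruijn shift $u\to v$ whose dropped cell is $0$, and $\in\{-1,0\}$ when it is $1$. This is a difference‑constraint (potential‑feasibility) system; flipping an end cell of $r$ perturbs many entering/exiting windows at once, and these do \emph{not} cancel, so no contradiction is forced. In fact the statement is false. For $n=6$ and $r=11010$ let $\phi=-1$ on the seven strings $01010,10100,10101,11010,11101,11110,11111$ and $\phi=0$ on the remaining twenty‑five length‑$5$ strings; one checks the shift condition on every de~Bruijn edge, so this is a bona fide NCCA. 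Then $f(011010)=f(111010)=f(110100)=f(110101)=1$, so all of $0r,1r,r0,r1$ lie in $P_A$, yet $f(011011)=f(111011)=f(010100)=f(010101)=0$, so \emph{none} of the four candidates does.

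Consequently the reduction and the flow computation are the right first two steps, but the intended final contradiction cannot be reached: the lemma as stated needs an extra hypothesis on $r$ (precisely, that the above potential system admit no feasible completion forcing $\alpha=\beta=0$ — equivalently, excluding the ``shift‑compatible'' $r$ witnessed above). I would therefore either restrict the statement accordingly or replace it by the weaker assertion that the shared bundle pattern $r$ can be resolved by \emph{some} element of $P_A$, not necessarily one of these four candidates.
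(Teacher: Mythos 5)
There is nothing in the paper to compare your attempt against: the paper's ``proof'' of this lemma is literally \emph{Omitted}. What you have produced is not a proof but a refutation, and after checking it independently I believe the refutation is correct. Your first step is sound: applying identity~(2) from the proof of Lemma~\ref{lemma:bundle} to the patterns $0a_1\cdots a_{n-1}$ and $a_1\cdots a_{n-1}0$ under the four hypotheses forces $f(0a_1\cdots a_{n-2}\bar a_{n-1})=f(1a_1\cdots a_{n-2}\bar a_{n-1})=\alpha$ and $f(\bar a_1a_2\cdots a_{n-1}0)=f(\bar a_1a_2\cdots a_{n-1}1)=\beta$, so the lemma is exactly the claim that $\alpha=\beta=0$ is impossible. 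Your example shows it is possible. With $n=6$, $\phi=-1$ on $S=\{01010,10100,10101,11010,11101,11110,11111\}$, $\phi=0$ otherwise, and $f(x_1\cdots x_6)=x_1+\phi(x_1\cdots x_5)-\phi(x_2\cdots x_6)$, one verifies: (i) $f$ takes values in $\{0,1\}$, because the only element of $S$ beginning with $0$ is $01010$ and both of its shifts $10100,10101$ lie in $S$, while for every $s_1\cdots s_5\in S$ the word $1s_1s_2s_3s_4$ again lies in $S$; (ii) $f$ is finite-number-conserving, since $\phi(00000)=0$ makes the $\phi$-terms telescope to zero on every finite configuration (note that only this easy direction of the flux representation is needed for the counterexample, not your unproved assertion that every NCCA admits such a $\phi$); (iii) for $r=11010$ one computes $f(011010)=f(111010)=f(110100)=f(110101)=1$ while $f(011011)=f(111011)=f(010100)=f(010101)=0$. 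So all four hypothesis patterns lie in $P_A$ and none of the four candidate patterns does: the lemma as stated is false.

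You should also note that the damage extends beyond this one statement. In your rule the pairings of $011010,111010,110100,110101$ are forced (each has exactly one available partner in $P_A$), and the two forced pairs share the single bundle pattern $11010$; hence every bundle pattern set of this $P_A$ (which does have $|P_A|=32=2^{n-1}$) has size at most $15<|P_A|/2$. Consequently the paper's unconditioned claim following the lemma (``for all NCCA $A$ there exists a bundle pattern set $\hat P_A$ of half the size of $P_A$'') fails, and this $6$-cell NCCA admits no bundle tree. Theorem~\ref{thm:main} itself is unaffected because it explicitly assumes $|\hat P_A|=|P_A|/2$, but Corollary~\ref{cor:bundletree} and Theorem~\ref{theorem2}, read as statements about all NCCA, inherit the gap. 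Your proposed repair --- adding a hypothesis excluding such shift-compatible $r$, or weakening the conclusion to the existence of \emph{some} alternative resolution --- is the right direction, but as your example shows, any weakened statement will need a genuinely different argument, since here the obstruction is real and not an artifact of the proof strategy.
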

\begin{proof} Omitted. 
\end{proof} 

\medskip

Therefore for all NCCA $A$, there exist a bundle pattern set $\hat{P}_A$ of half the size of its value-1 pattern set $P_A$.

\begin{theorem}\label{thm:main}
For an $n$-cell NCCA $A$ ($n\ge 2$) with $|\hat{P}_A|=|P_A|/2$, an $(n-1)$-cell CA $B$ satisfying $P_B=\hat{P_A}$ is an $(n-1)$-cell NCCA.
\end{theorem}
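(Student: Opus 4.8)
The plan is to prove number conservation of $B$ directly from its definition by exhibiting, for every finite configuration, a bijection between the cells where $A$ produces a $1$ and the cells where $B$ produces a $1$. Since $A$ is an NCCA, the number of length-$n$ windows matching $P_A$ equals the total number of $1$s in the configuration; if these can be matched one-to-one with the length-$(n-1)$ windows matching $\hat P_A = P_B$, then $B$ outputs exactly the same number of $1$s and is therefore finite-number-conserving, which by Durand et al.~\cite{Durand} is enough. First I would record the boundary fact that $0^{n-1}\notin\hat P_A$: otherwise one of its bundle pairs would contain $0^n$, contradicting $f_A(0^n)=0$. Hence $f_B(0^{n-1})=0$, so $B$ preserves finiteness and the quiescent background.

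Next I would fix, once and for all, a perfect matching $M$ on $P_A$ whose existence is guaranteed by Lemma~\ref{lemma:bundle} together with the hypothesis $|\hat P_A|=|P_A|/2$: the value-$1$ patterns split into $|P_A|/2$ pairs, each being an l-pair $\{0r,1r\}$ or an r-pair $\{r0,r1\}$ collapsing to its bundle pattern $r$. The decisive structural consequence of $|\hat P_A|=|P_A|/2$, secured by Lemma~\ref{half_lemma2} and the subsequent lemma, is that every $r\in\hat P_A$ is the bundle pattern of exactly one pair of $M$; hence each bundle pattern is unambiguously of l-type or of r-type, and the degenerate situation $\{0r,1r,r0,r1\}\subseteq P_A$ never forces a pattern to be both.

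Then, for a finite configuration $d$, writing $W_i=d_i\cdots d_{i+n-1}$ for the length-$n$ window and $w_i=d_i\cdots d_{i+n-2}$ for the length-$(n-1)$ window at cell $i$, I would define the index map $\Psi$ on $\{i: W_i\in P_A\}$ by $\Psi(i)=i+1$ when $W_i$ is l-matched in $M$ and $\Psi(i)=i$ when it is r-matched, together with the reverse map $\Phi$ on $\{j: w_j\in\hat P_A\}$ sending $j$ to $j-1$ when the defining pair of $w_j$ is an l-pair and to $j$ when it is an r-pair. A short check shows that $\Psi$ lands in $\{j:w_j\in\hat P_A\}$ (an l-matched $W_i$ forces its shared suffix $w_{i+1}$ to be an l-bundle pattern, and symmetrically for r), that $\Phi$ lands in $\{i:W_i\in P_A\}$ (if $w_j$ is an l-bundle pattern then $0w_j,1w_j\in P_A$, so $W_{j-1}=d_{j-1}w_j\in P_A$ regardless of $d_{j-1}$), and that $\Psi$ and $\Phi$ are mutually inverse. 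Consequently $\sum_i f_B(w_i)=\sum_i f_A(W_i)=|d|$, so $B$ is an $(n-1)$-cell NCCA.

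The main obstacle is the well-definedness and injectivity of this bijection. Injectivity would fail exactly in the hypothetical case where some length-$(n-1)$ pattern is reachable both as an l-bundle (from $W_i$) and as an r-bundle (from $W_{i+1}$), i.e. where two distinct pairs of $M$ collapse to the same bundle pattern; this is precisely what $|\hat P_A|=|P_A|/2$ forbids, and checking that the preceding lemmas genuinely rule it out is the step that carries the weight of the argument. The remaining verifications, namely that $\Psi$ and $\Phi$ respect the two index sets and invert one another, and the bookkeeping at the ends of the finite support of $d$, are routine.
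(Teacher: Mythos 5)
Your proof is correct and takes essentially the same route as the paper: each cell where $A$ produces a $1$ is matched to the cell where $B$ produces a $1$ at the corresponding bundle-pattern window, and the hypothesis $|\hat{P}_A|=|P_A|/2$ is what rules out two such windows colliding on the same bundle pattern (the paper argues this collision would force $|P_B|=|P_A|/2-1$, exactly your injectivity step). If anything, your version is more complete than the paper's, which only establishes the forward (no-overlap) direction and leaves implicit both the reverse map $\Phi$ showing every $1$ of $B$ arises from a $1$ of $A$, and the boundary fact $g(0^{n-1})=0$.
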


\begin{proof}
Let $F$ be the global function of an NCCA $A=(n,f)$. Because $A$ is an NCCA, $|c|=|F(c)|$ for any configuration $c(=\cdots c_{-1}c_0c_1\cdots)$. 
Let $G$ be the global function of a CA $B=(n-1,g)$.
If $|G(c)|=|F(c)|$ then $|c|=|F(c)|=|G(c)|$. i.e., $B$ can be an NCCA.
Then we will show $|G(c)|=|F(c)|$ from now.\\
For each state-1 cell $F(c)(k)=1$, $k\in \mathbb{Z}$ on $F(c)$,
\begin{equation*}
    \exists p = c_k \cdots c_{k+n-1} \in P_A \ {\rm s.t.\ } f(p)=1.
\end{equation*}
Also for a pattern $q \in \hat{P}$, which is a bundle pattern of $p$, it can be satisfied $g(q)=1$(i.e., $q \in P_B$).
Thus if $F(c)(k)=1$ then either $G(c)(k)=1$ or $G(c)(k+1)=1$ holds.
When $p$ is a l-bundle of $q$, $G(c)(k)=1$ and when $p$ is a r-bundle of $q$, $G(c)(k+1)=1$ like Fig.~\ref{fig:bundle}.\\
\begin{figure}[h]
\begin{center}
\includegraphics[scale=0.22]{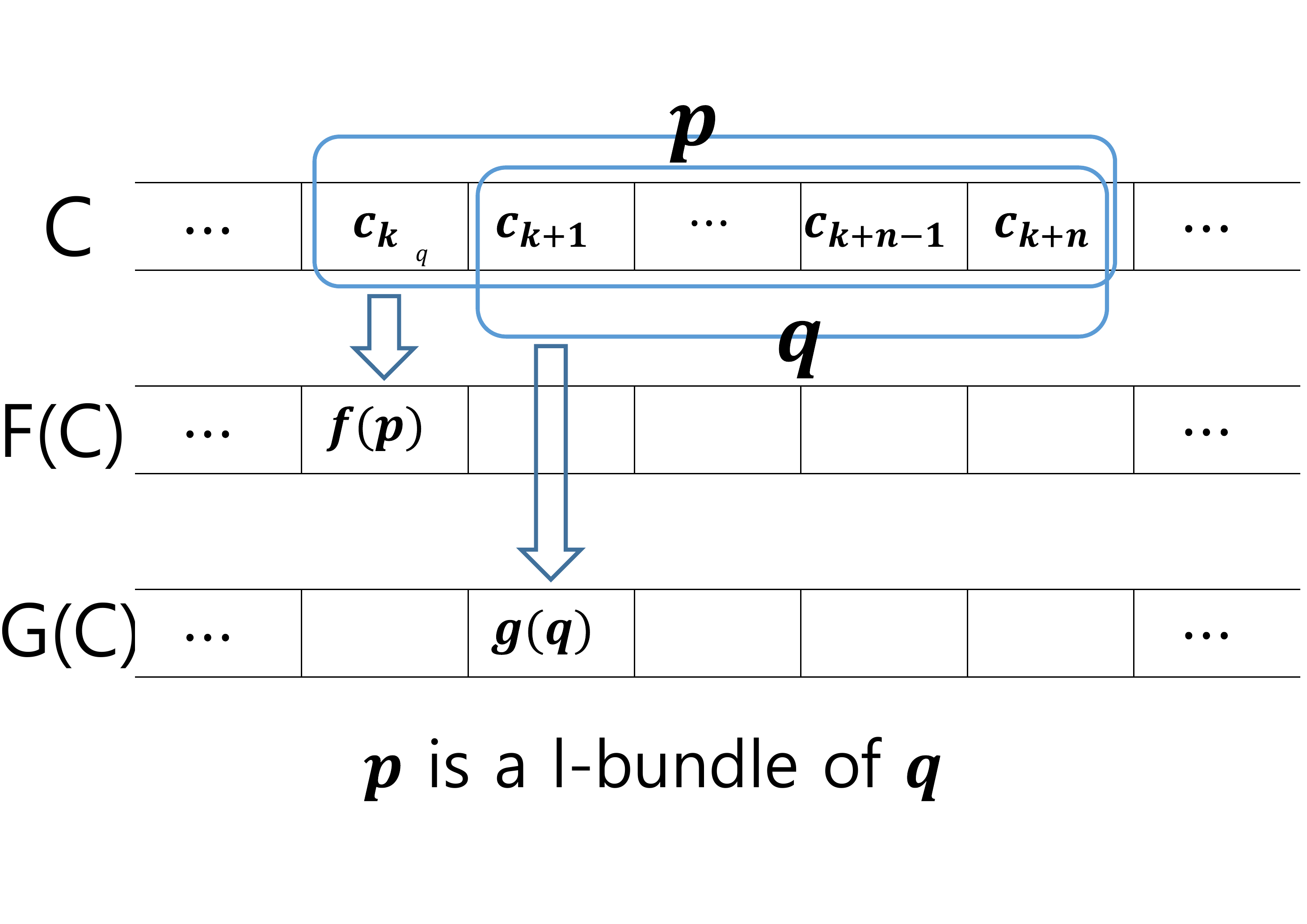}
\includegraphics[scale=0.22]{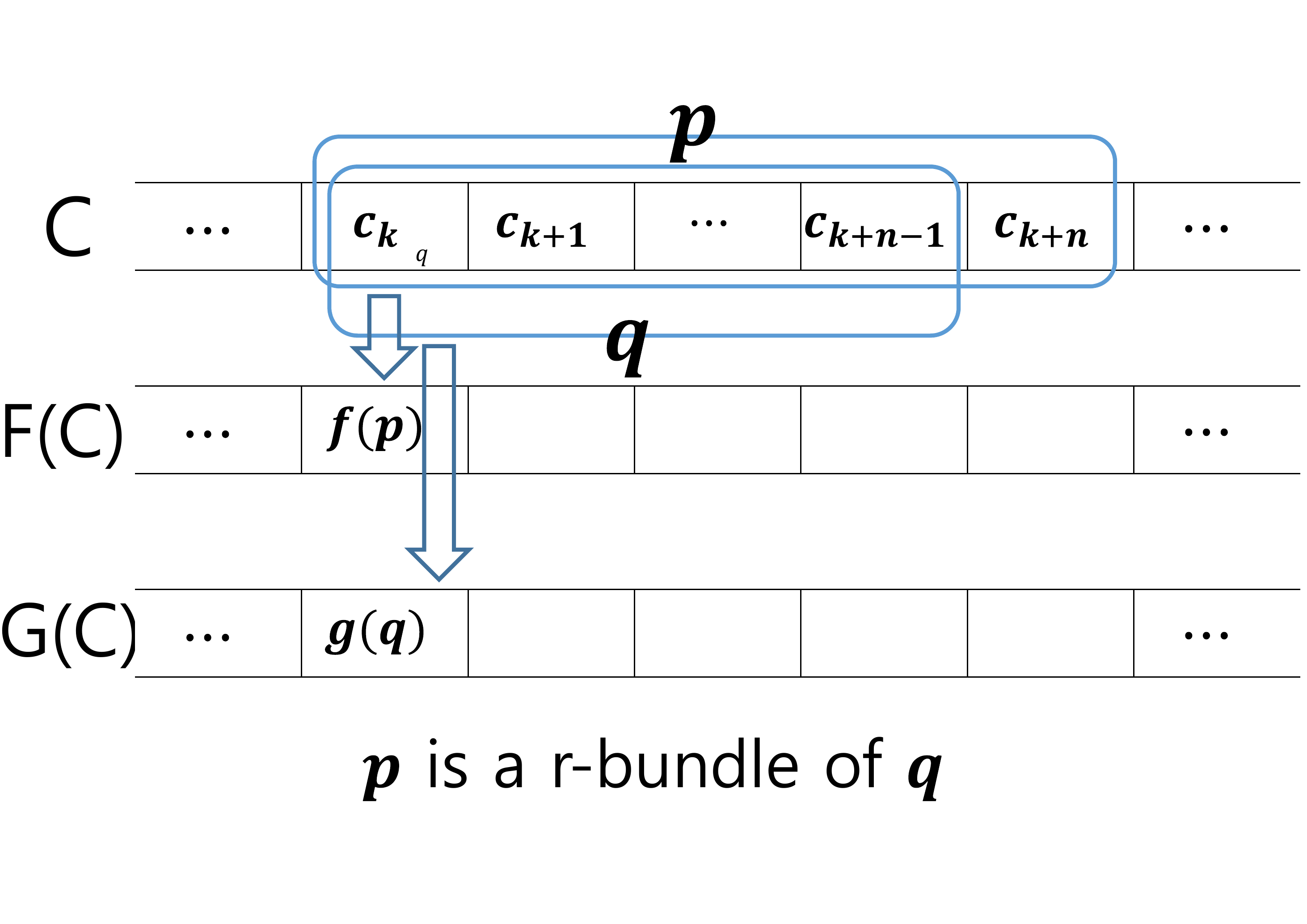}
\caption{Two evolutions according to the relation between $p$ and $q$}
\label{fig:bundle}
\end{center}
\end{figure}

\noindent If $F(c)(k)=F(c)(k+1)=1$ then\\
(1) by $F(c)(k)=1$, either $G(c)(k)=1$ or $G(c)(k+1)=1$ holds,\\
(2) by $F(c)(k+1)=1$, either $G(c)(k+1)=1$ or $G(c)(k+2)=1$ holds.\\
So if $G(c)(k+1)=1$ occur simultaneously in (1) and (2), an overlap occurs.
Thus $|F(c)|=|G(c)|$ is satisfied if there is no overlap.\\
Suppose that the above case has occurred. Then there are four patterns $p=c_k \cdots c_{k+n-1}$, $q=c_{k+1}\cdots c_{k+n-1}$ which is a l-bundle of $p$ and $p'=c_{k+1}\cdots c_{k+n}$, $q'=c_{k+1}\cdots c_{k+n-1}$ which is a r-bundle of $p$. Then $q$ is the same with $q'$. Because of two patterns $q,q'$ are the same in $P_B$, $|P_B|=|P_A|/2 - 1$. This contradics the assumption, $|P_B|=|P_A|/2$.\\
In the result, $G(c)$ is a 2-state configuration with $|G(c)|=|F(c)|=|c|$. Then $B$ be an NCCA. 

The value-1 pattern set of 1-cell NCCA is $\{1\}$, so the number of elements in $P_A$ of $n$-cell NCCA ($n\ge 2$) is always $2^{n-1}$.
\end{proof}

Theorem~\ref{thm:main} shows that the value-1 pattern sets of an $n$-cell and an $(n-1)$-cell NCCA have a kind of hierarchical relation and we can extract the relation as a tree structure as follows:

\noindent
Let $P_n$ be the value-1 pattern set of an NCCA $A_n=(n,f_n)$.
By Theorem~\ref{thm:main}, we can get a sequence of value-1 pattern sets $P_i (n\ge i\ge 1)$ of $A_i=(i,f_i)$ where 
$P_{i-1}=\hat{P}_i$ and $|P_i|=2^{i-1}$. 
For each element $r$ in $P_i~(1<i<n-1)$, there are two elements $p,q\in P_{i+1}$ where $p (q)$ is an l-(r-)bundle of $r$, respectively. We can construct a tree $T_{A_n}=(V,E)$ where $V$ is the set of all elements in all sets $P_i$ and $E$ is the set of all edges 
$(r,p)$ and $(r,q)$ described above. 
Clearly, $T_{A_n}$ is a complete binary tree and its root vertex is $1$ and its height is $n-1$. The height-$i$ vertices of  $T_{A_n}$ are the elements of the value-1 pattern set of $A_i$. We call $T_{A_n}$ a {\em bundle tree} of $A_n$.
Fig.~\ref{fig:bundletree} is a bundle tree of the CA $(4,62600)$.

\begin{figure}[h]
\begin{center}
\includegraphics[scale=0.32]{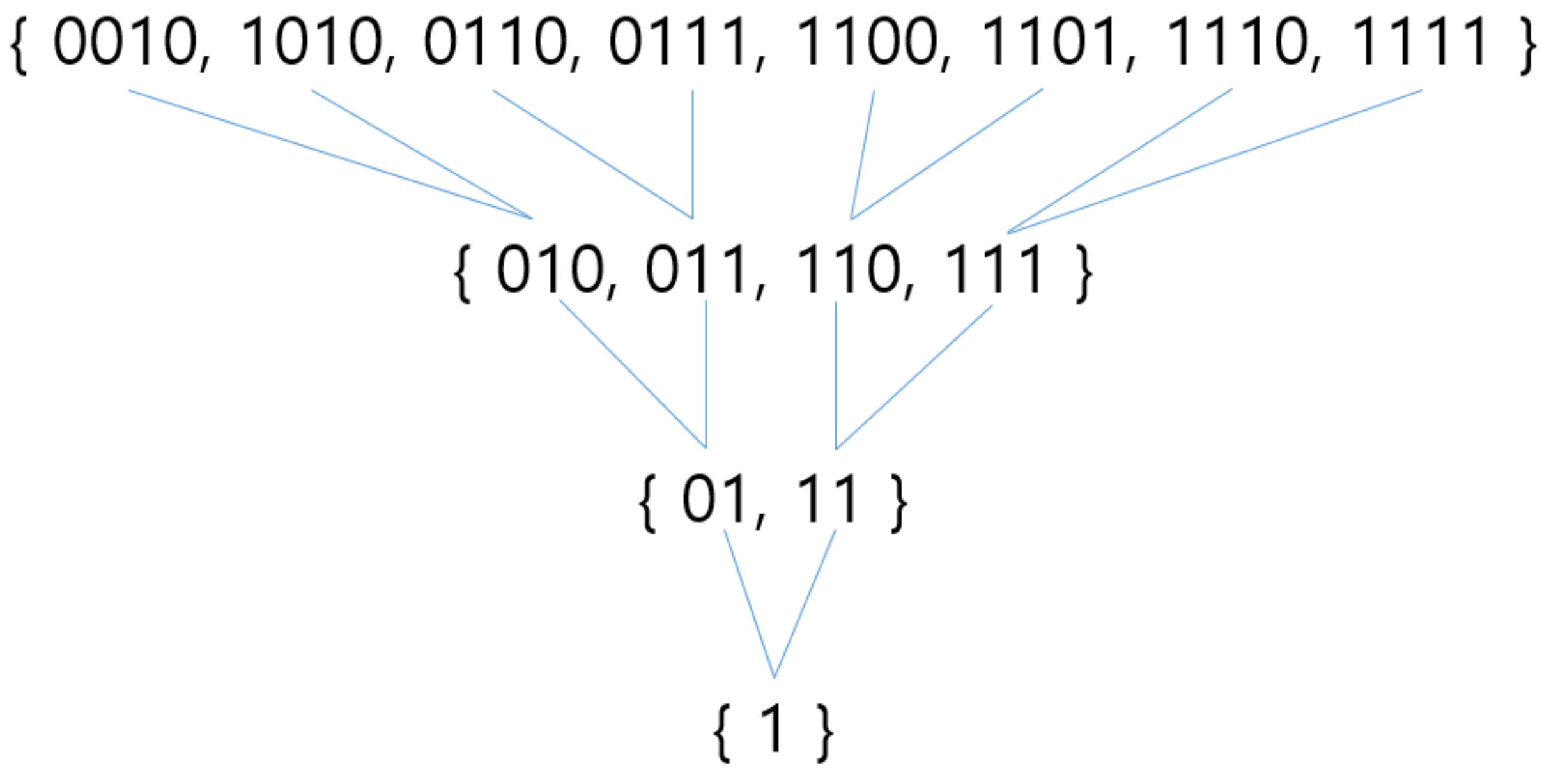}
\caption{A bundle tree of (4,62600)}
\label{fig:bundletree}
\end{center}
\end{figure}

By Lemma~\ref{lemma:bundle}, it is clear that a bundle tree exists for any NCCA rules.
Moreover we can get Corollary~\ref{cor:bundletree} and Theorem~\ref{theorem2} by Theorem~\ref{thm:main}.
\begin{corollary}\label{cor:bundletree}
For a bundle tree of $A_n$, $\forall i \in \mathbb{Z}, A_i$ is an $i$-cell NCCA.
\end{corollary}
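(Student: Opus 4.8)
The plan is to prove this by downward induction on $i$, taking Theorem~\ref{thm:main} as the inductive step. The base case is immediate: $A_n$ is an $n$-cell NCCA by hypothesis, since a bundle tree is by construction built starting from a given NCCA $A_n=(n,f_n)$.

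For the inductive step, I would assume that $A_i$ is an $i$-cell NCCA for some $i$ with $2\le i\le n$, and show that $A_{i-1}$ is an $(i-1)$-cell NCCA. The key point is that the hypothesis of Theorem~\ref{thm:main}, namely $|\hat{P}_{A_i}|=|P_{A_i}|/2$, is automatically satisfied whenever $A_i$ is an NCCA. This is exactly the content of the statement established just before Theorem~\ref{thm:main} (as a consequence of Lemma~\ref{lemma:bundle}, Lemma~\ref{half_lemma2}, and the subsequent lemma): for any NCCA every value-1 pattern pairs off into distinct bundles without overlap, so the bundle pattern set has exactly half the size of the value-1 pattern set. Since in the bundle-tree construction we set $P_{i-1}=\hat{P}_i$, Theorem~\ref{thm:main} then yields directly that $A_{i-1}=(i-1,f_{i-1})$ with $P_{A_{i-1}}=\hat{P}_{A_i}$ is an $(i-1)$-cell NCCA.

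Iterating from $i=n$ down to $i=2$ gives that $A_i$ is an $i$-cell NCCA for every $i$ in the range $1\le i\le n$, the level $i=1$ being the trivial one-cell NCCA with value-1 pattern set $\{1\}$. (Here the quantifier ``$\forall i\in\mathbb{Z}$'' in the statement should be read as ranging over the levels $1\le i\le n$ actually present in the tree.) This completes the argument.

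I do not anticipate a genuine obstacle, since the substantive work has already been carried out in Theorem~\ref{thm:main} and the lemmas leading to the half-size property. The only point requiring care is to verify that the half-size hypothesis of Theorem~\ref{thm:main} transfers correctly down each level: one must confirm that the set $P_{i-1}=\hat{P}_i$ produced at one level is genuinely the value-1 pattern set of an \emph{NCCA}, rather than merely a pattern set of the right cardinality, so that the half-size property applies again one level further down. This is precisely what is guaranteed by the conclusion of Theorem~\ref{thm:main}, namely that $A_{i-1}$ is itself an NCCA, which is exactly what keeps the induction self-sustaining.
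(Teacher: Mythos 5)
Your proposal is correct and follows essentially the same route as the paper: the paper obtains the corollary by repeatedly applying Theorem~\ref{thm:main} down the levels of the bundle tree, using the fact (established via Lemmas~\ref{lemma:bundle} and \ref{half_lemma2} and the following lemma) that every NCCA's bundle pattern set has half the size of its value-1 pattern set, which is exactly your downward induction. Your explicit attention to why the half-size hypothesis re-applies at each level, and your reading of ``$\forall i\in\mathbb{Z}$'' as the levels $1\le i\le n$ actually present in the tree, only make explicit what the paper leaves implicit.
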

\begin{theorem}
\label{theorem2}
Bundle tree of any NCCA is always a binary tree with root $\left\{1\right\}$.
\end{theorem}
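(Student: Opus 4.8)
\emph{Proof proposal.} The plan is to read off the two asserted features---binary branching and the root $\{1\}$---directly from the layered construction of the bundle tree, relying on Corollary~\ref{cor:bundletree} to guarantee that every layer is genuinely the value-1 pattern set of an NCCA. First I would fix the sequence $P_1,\dots,P_n$ obtained by iterating Theorem~\ref{thm:main}. Since Lemma~\ref{half_lemma2} and the lemma following it establish that $|\hat P_A|=|P_A|/2$ for the value-1 pattern set of every NCCA, the hypothesis of Theorem~\ref{thm:main} is satisfied at each step; hence each $P_{i-1}=\hat P_i$ is again a value-1 pattern set of an $(i-1)$-cell NCCA and $|P_i|=2^{i-1}$. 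This already partitions the vertex set $V$ of $T_{A_n}$ into levels $P_1,\dots,P_n$ of sizes $1,2,4,\dots,2^{n-1}$.

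For the root, I would invoke the characterization of the $1$-cell NCCA used at the end of the proof of Theorem~\ref{thm:main}: every NCCA satisfies $f(0\cdots 0)=0$, and finite-number-conservation applied to the configuration carrying a single $1$ forces $f(1)=1$, so $P_1=\{1\}$. The unique level-$1$ vertex is therefore $\{1\}$, which is the root by the construction of $E$.

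The branching is a counting statement made level by level. For each $i$ with $1\le i<n$, the level-$i$ set $P_i=\hat P_{i+1}$ is the bundle pattern set of $P_{i+1}$; by definition it arises from a perfect matching on $P_{i+1}$ that pairs two patterns differing in a single end bit and deletes that bit. I would argue that, because this matching has no overlap, each $r\in P_i$ is the bundle pattern of exactly one pair $\{p,q\}\subset P_{i+1}$, and these two patterns are precisely the l-bundle and r-bundle children of $r$ joined to it by the edges of $E$. Hence every vertex below level $n$ has exactly two distinct children, which is the binary-tree property. Assembling the three observations yields that $T_{A_n}$ is a binary tree---indeed a complete one of height $n-1$---with root $\{1\}$.

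The step I expect to carry the real weight is the no-overlap claim underlying the branching. A priori a single bundle pattern $r$ could be produced by two distinct pairs in $P_{i+1}$---for instance when $\{0r,1r,r0,r1\}\subset P_{i+1}$---which would make the pair-to-bundle map non-injective and leave $r$ with the wrong number of children. Lemma~\ref{half_lemma2} together with the lemma that follows it is exactly what excludes this, forcing $|P_i|=|P_{i+1}|/2$ and a clean pairing. With that obstacle dispatched by the lemmas, the remainder of the argument is bookkeeping across the $n$ levels.
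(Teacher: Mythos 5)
Your proposal is correct and follows essentially the same route as the paper: iterate Theorem~\ref{thm:main} and Corollary~\ref{cor:bundletree} to get levels $P_i$ of size $2^{i-1}$, identify the root level as the unique $1$-cell NCCA with $P_1=\{1\}$, and conclude the binary structure. You in fact supply details the paper's terse proof leaves implicit (deriving $f(1)=1$ from number conservation, and pinning the exactly-two-children property on the no-overlap lemmas), but the underlying argument is the same counting-by-levels one.
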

\begin{proof}
The number of elements of value-1 pattern set of an NCCA $(n,f)$ is always $2^{n-1}$. Then it is clear that the number of elements of pattern sets at $i$th level is $2^{i-1}$ by corollary2. Moreover the smallest cell NCCA is 1-cell NCCA ${1}$. Then bundle tree of any NCCA be a binary tree with root(the $1$st level) ${1}$. 
\end{proof}

\section{Conclusion}
In this paper, we show any NCCA of its neighborhood size n can be hierarchically represented by NCCAs of their neighborhood size from $n-1$ to $1$.
The result supports that any NCCA can be understood by a hierarchical motion representations according to the necessary pattern size to each motion.

\bibliographystyle{unsrt}  


\end{document}